\documentclass[11pt,a4paper]{article}

\usepackage[margin=1in]{geometry}
\usepackage{amsmath,amssymb,amsthm,mathtools}
\usepackage{enumitem}
\usepackage{booktabs}
\usepackage{authblk}
\usepackage{microtype}
\usepackage[colorlinks=true,linkcolor=blue,citecolor=blue,urlcolor=blue]{hyperref}
\usepackage[nameinlink,noabbrev]{cleveref}
\setlist[itemize]{leftmargin=2em,itemsep=3pt,topsep=4pt}
\setlist[enumerate]{leftmargin=2em,itemsep=3pt,topsep=4pt}
\allowdisplaybreaks

\theoremstyle{plain}
\newtheorem{theorem}{Theorem}[section]
\newtheorem{lemma}[theorem]{Lemma}
\newtheorem{corollary}[theorem]{Corollary}

\theoremstyle{definition}
\newtheorem{definition}[theorem]{Definition}

\theoremstyle{remark}

\newcommand{\Prb}{\mathbb{P}}

\begin{document}

\title{Revisiting Non-Adaptive Set Membership Schemes: Removing the Logarithmic Factor}
\author[1]{Mirza Galib Anwarul Husain Baig\thanks{Email: mirza@iitmandi.ac.in}}
\author[1]{Aman Kumar \thanks{Email: d24227@students.iitmandi.ac.in}}
\author[1]{Pawan Kumar Saw\thanks{Email: d24229@students.iitmandi.ac.in}}
\author[1]{Vikash Tripathi \thanks{Email: vikash@iitmandi.ac.in}}

\affil[1]{Indian Institute of Technology Mandi
}

\date{}
\maketitle

\begin{abstract}
We study non-adaptive bit-probe schemes for static set membership. Each query reads a fixed number of bits and answers by strict majority. We revisit the random bipartite-graph construction of Garg and Radhakrishnan \cite{GargRadhakrishnan2017}. The main observation is that a slightly stronger expansion property already bounds the number of exceptional elements. This makes a separate probabilistic estimate for exceptional elements unnecessary. A detailed union-bound calculation then gives, for every fixed $t\ge5$, a logarithm-free existence bound of $O_t(m^{1/c}n^{1-1/c})$, where $c=\lfloor(t-1)/2\rfloor$. For odd $t$, this removes the logarithmic factor from the general upper bound of Garg and Radhakrishnan \cite{GargRadhakrishnan2017}. 
\end{abstract}

\noindent\textbf{Keywords:} Static set membership, Bit-probe model, Non-adaptive schemes.

\section{Introduction}
\label{sec:intro}

\subsection{Static set membership}

The static set-membership problem asks how to store a set compactly while still answering membership queries quickly. We are given a universe $[m]=\{1,2,\ldots,m\}$ and a set $S\subseteq[m]$ with $|S|\le n$. The data structure stores $S$ as a bit string. A query specifies an element $x\in[m]$ and asks whether $x\in S$.

The simplest solution is the characteristic vector of $S$. It uses $m$ bits. The bit at position $x$ is one exactly when $x\in S$. A query needs only one bit probe. This representation is wasteful when $n$ is much smaller than $m$. One can instead store the elements of $S$ directly, but then a query may need to inspect many stored elements. The bit-probe model studies the trade-off between the number of stored bits and the number of bits inspected by each query. The survey of Nicholson, Raman, and S. Srinivasa Rao \cite{NicholsonRamanRao2013} gives a broad introduction to the model and its history.

A scheme is called \emph{adaptive} if the location of a later probe can depend on the values returned by earlier probes. It is called \emph{non-adaptive} if all probe locations are fixed before any memory bit is read. Non-adaptive probes can be issued in parallel. This restriction is central to the present paper.

We write $s_N(m,n,t)$ for the minimum number of bits needed by a deterministic non-adaptive scheme that stores every set of size at most $n$ from $[m]$ and answers each query using at most $t$ probes. Distinct sets must have distinct representations, so the elementary counting bound is
\[
 s_N(m,n,t)\ge \log_2\binom{m}{n}.
\]
When $n\le m/2$, this is of order at least $n\log(m/n)$. Stronger lower bounds are known in particular parameter regimes, as discussed below.

\subsection{Four probes and a few elements}

The number of probes matters even when it is a small constant. Alon and Feige studied non-adaptive schemes with two, three, and four probes \cite{AlonFeige2009}. Alon and Feige~\cite{AlonFeige2009} proved that, for four non-adaptive
bitprobes and $\delta \leq \frac{1}{3}$, the space requirement satisfies
\[
s \geq c_3 m^{\frac{1}{1+\delta}}
n^{\frac{\delta}{1+\delta}},
\]
for some absolute constant $c_3>0$. The scheme is non explicit. 

Later work gave explicit constructions for some small-set cases. Baig and Kesh gave an explicit non-adaptive four-probe scheme for storing sets of size at most four and five \cite{BaigKeshWALCOM2018, BaigKeshTCS2020}. These results address particular small values of the set size. Our goal is different. We seek one asymptotic construction that works for every set size bound $n$ and every fixed probe count $t\ge5$.

\subsection{\texorpdfstring{General $t$-probe results}{General t-probe results}}

Garg and Radhakrishnan developed a random-graph method for non-adaptive majority schemes \cite{GargRadhakrishnan2017}. For odd $t\ge5$, their upper bound is
\[
 s_N(m,n,t)=O\!\left(
 t\,m^{\frac{2}{t-1}}n^{1-\frac{2}{t-1}}
 \log_2\frac{2m}{n}
 \right).
\]
Their proof builds a bipartite graph between universe elements and memory cells. Each element is connected to one cell in each of $t$ blocks. A query reads the cells adjacent to that element and uses majority. Two graph properties guarantee that every target set can be encoded correctly. The graph is then shown to exist by the probabilistic method.

There is also a general lower bound due to Buhrman, Miltersen, Radhakrishnan, and Venkatesh \cite{BuhrmanEtAl2002}. In one commonly studied regime, for every fixed $\varepsilon>0$ and $n\le m^{1-\varepsilon}$, with $t=o(\log m)$, their lower bound is
\[
 s_N(m,n,t)=\Omega\!\left(t\,m^{1/t}n^{1-1/t}\right).
\]
The lower bound holds even for adaptive schemes under the stated hypotheses, and therefore also applies to non-adaptive schemes. The upper and lower exponents do not match. Removing a logarithmic factor from an upper bound does not by itself settle the optimal space complexity.

\subsection{Our contribution}

We keep the graph model, the majority decoder, and the matching-based encoding method of Garg and Radhakrishnan \cite{GargRadhakrishnan2017}. We change the way the exceptional elements are handled.

In the usual proof, two graph properties are established separately. The first is expansion. The second limits the number of outside elements whose neighbourhood overlaps the target neighbourhood in a majority of its positions. The standard proof uses a logarithmic budget for these exceptional elements.

We show that a slightly stronger expansion property implies the exceptional-element bound deterministically. This means that we only need one probability estimate, for expansion. The required expansion radius remains proportional to $n$ when $t$ is fixed. The resulting block size has no logarithmic factor.

More precisely, for every fixed integer $t\ge5$, we prove
\[
 s_N(m,n,t)=O_t\!\left(m^{1/c}n^{1-1/c}\right),
 \qquad c=\left\lfloor\frac{t-1}{2}\right\rfloor.
\]
For odd $t$, this becomes
\[
 s_N(m,n,t)=O_t\!\left(
 m^{\frac{2}{t-1}}n^{1-\frac{2}{t-1}}
 \right).
\]
And for the even $t$, this becomes \[
 s_N(m,n,t)=O_t\!\left(
 m^{\frac{2}{t-2}}n^{1-\frac{2}{t-2}}
 \right).
\] The subscript on the big-$O$ indicates that the hidden constant may depend on the fixed probe count $t$.

The rest of the paper follows the order of the general upper-bound argument in Garg and Radhakrishnan \cite{GargRadhakrishnan2017}. We first define the graph model and admissibility conditions. We then show that an admissible graph gives a correct scheme. After that, we prove the existence of a suitable random graph. The last sections derive the space bound and compare it with the previous results.

\section{The graph model and admissibility}
\label{sec:framework}

\subsection{Non-adaptive graphs}

\begin{definition}[Non-adaptive $(m,s,t)$-graph]\label{def:graph}
A non-adaptive $(m,s,t)$-graph is a bipartite graph $G=(U,V;E)$ with left side $U=[m]$ and right side partitioned into $t$ disjoint blocks,
\[
 V=V_1\sqcup V_2\sqcup\cdots\sqcup V_t,
 \qquad |V_i|=s\quad(i\in[t]).
\]
Every $u\in U$ has exactly one neighbour in each block $V_i$. For $X\subseteq U$, let $\Gamma(X)$ be the set of right-side vertices adjacent to at least one element of $X$.
\end{definition}

The graph fixes the locations probed by each query. The memory consists of $ts$ bits, one for each vertex of $V$. To answer a query for $u$, the decoder reads the $t$ bits indexed by $\Gamma(u)$ and answers Yes if at least
\[
 a:=\left\lfloor\frac t2\right\rfloor+1
\]
of them equal one. This is strict-majority decoding. We also put
\[
 c:=t-a=\left\lfloor\frac{t-1}{2}\right\rfloor,
 \qquad \delta:=a-c.
\]
Thus $a+c=t$, $a>t/2$, and $\delta\in\{1,2\}$. If $t$ is odd, then $\delta=1$; if $t$ is even, then $\delta=2$. Since $t\ge5$, we have $c\ge2$.

The graph does not by itself specify the memory contents. For each target set $S$, the encoder has to assign zeros and ones so that exactly the elements of $S$ receive a Yes answer. We next state two structural conditions on the graph that guarantee the existence of such an assignment.

\subsection{The two admissibility properties}

Choose the parameters
\[
 g:=\left\lfloor\frac{cn}{\delta}\right\rfloor,
 \qquad R:=n+g+1.
\]
We first work in the range $m\ge R$. 

\begin{definition}[Admissible graph]\label{def:admissible}
A non-adaptive $(m,s,t)$-graph is called $(n,g)$-admissible if it satisfies the following two properties.
\begin{description}[leftmargin=3.2em,style=nextline]
\item[(P1) Expansion.] For every $W\subseteq[m]$ with $|W|\le R$,
\[
 |\Gamma(W)|\ge a|W|.
\]
\item[(P2) Few exceptional elements.] For every $S\subseteq[m]$ with $|S|=n$, define
\[
 T_S:=\{y\in[m]\setminus S:
 |\Gamma(y)\cap\Gamma(S)|\ge a\}.
\]
Then $|T_S|\le g$.
\end{description}
\end{definition}

An element $y\notin S$ belongs to $T_S$ if at least $a$ of its probe locations also lie in $\Gamma(S)$. If those locations are set to one while encoding $S$, the query for $y$ could receive a false Yes answer. Property (P2) says that there are only a few such elements.

The radius $R=n+g+1$ is one larger than the number $n+g$ needed in the matching argument. This extra vertex is deliberate. It is what makes the deterministic implication in the next section strict.

\subsection{Admissibility implies satisfiability}

\begin{lemma}[Admissibility implies a correct encoding]\label{lem:sat}
If $G$ is $(n,g)$-admissible, then for every $S\subseteq[m]$ with $|S|\le n$, there is an assignment of the $ts$ memory bits that answers all membership queries correctly by strict majority.
\end{lemma}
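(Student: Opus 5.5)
We follow the matching-based encoding of Garg and Radhakrishnan. First we reduce to $|S|=n$. If $|S|<n$, enlarge $S$ to a set $S'\supseteq S$ with $|S'|=n$; this is possible since $m\ge R>n$. We use $S'$ only to define the exceptional set $T:=T_{S'}$, whose size is at most $g$ by (P2). Exceptional elements of $S'$ relative to $S$ can be handled alongside $T$: for $|S|<n$ we may instead define $T:=\{y\notin S:|\Gamma(y)\cap\Gamma(S)|\ge a\}$. Since $\Gamma(S)\subseteq\Gamma(S')$, this $T$ is contained in $T_{S'}\cup(S'\setminus S)$. Its size is therefore at most $g+(n-|S|)$, so $|S\cup T|\le n+g$. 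The case $|S|=n$ is the direct instance, with $|S\cup T_S|\le n+g$.

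Next, set $W:=S\cup T$, so that $|W|\le n+g<R$. We find a system of private cells. By (P1), every $X\subseteq W$ has $|\Gamma(X)|\ge a|X|$. Hall's theorem, applied to the graph in which each $u\in W$ is replicated $a$ times, then yields a map assigning each $u\in W$ a set $M(u)\subseteq\Gamma(u)$ with $|M(u)|=a$, where the sets $M(u)$ are pairwise disjoint. The cells in $M(u)$ lie in distinct blocks, since $u$ has exactly one neighbour per block.

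The encoding is as follows. Set every cell in $M(u)$ to $1$ for $u\in S$, and to $0$ for $u\in T$. Set every remaining cell of $\Gamma(S)$ to $1$, and every other cell to $0$. To verify correctness:
\begin{itemize}
\item For $u\in S$, at least $a$ of its probes, namely those in $M(u)$, read $1$, so $u$ is answered Yes.
\item For $y\in T$, at least $a>t/2$ of its probes read $0$, so at most $t-a=c<a$ read $1$, and $y$ is answered No.
\item For $y\notin S\cup T$, the only cells set to $1$ lie in $\Gamma(S)$, and $|\Gamma(y)\cap\Gamma(S)|\le a-1$ by the definition of $T$. Hence at most $a-1$ of its probes read $1$, and $y$ is answered No.
\end{itemize}
Cells in $M(y)$ for $y\in T$ may lie inside $\Gamma(S)$; this causes no conflict, since we set them to $0$ and the sets $M(\cdot)$ are disjoint.

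The main obstacle is the bookkeeping for $|S|<n$: (P2) is stated only for sets of size exactly $n$, so we must control $|W|\le n+g$ so that (P1) applies. The padding argument above resolves this, because the extra elements $S'\setminus S$ account for the deficit $n-|S|$. The Hall step also needs care: we replicate each vertex $a$ times and check Hall's condition via (P1) on the set of originals underlying any collection of copies. A collection drawn from the copies of $X$ has size at most $a|X|$ and neighbourhood $\Gamma(X)$, with $|\Gamma(X)|\ge a|X|$.
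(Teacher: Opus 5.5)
Your proof is correct and follows the paper's argument: you pad $S$ to a size-$n$ superset to invoke (P2), apply (P1) and Hall's theorem to $a$ copies of each element of a set of size at most $n+g$ to obtain disjoint private $a$-sets, and then check the three kinds of query. The differences are only cosmetic and neither affects validity: you define the exceptional set relative to $S$ itself, bounding it through $T_{S'}\cup(S'\setminus S)$, so your $W$ is a subset of the paper's $Z=S'\cup T_{S'}$; and you also set the unreserved cells of $\Gamma(S)$ to one, whereas the paper sets only $\bigcup_{u\in S}A_u$ to one.
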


\begin{proof}
Fix a target set $S$ with $|S|\le n$. Choose a superset $S'\supseteq S$ of size exactly $n$. This is possible because $n\le m$.

By (P2), $|T_{S'}|\le g$. Define
\[
 Z:=S'\cup T_{S'}.
\]
The sets in this union are disjoint, so
\[
 |Z|=n+|T_{S'}|\le n+g=R-1.
\]
We will first give each element of $Z$ a private set of $a$ memory locations. We then set the locations for elements of $S$ to one and all other locations to zero.

Construct an auxiliary bipartite graph $H$. Its left side is $Z\times[a]$. Thus each element $u\in Z$ is copied $a$ times. The right side is $V$. Join each copy $(u,j)$ to every vertex in $\Gamma(u)$.

We verify Hall's condition. Take an arbitrary set $X\subseteq Z\times[a]$. Let $X_0$ be the set of elements of $Z$ that have at least one copy in $X$. There are at most $a$ copies of each element, so
\[
 |X|\le a|X_0|.
\]
All copies of the same element have the same neighbourhood. Therefore
\[
 N_H(X)=\Gamma(X_0).
\]
Since $X_0\subseteq Z$ and $|Z|\le R-1$, property (P1) applies and gives
\[
 |N_H(X)|=|\Gamma(X_0)|\ge a|X_0|\ge|X|.
\]
This holds for every subset $X$ of the left side. Hall's theorem therefore gives a matching that covers every vertex of $Z\times[a]$.

For each $u\in Z$, let $A_u$ be the set of the $a$ memory locations matched to its $a$ copies. Then
\[
 A_u\subseteq\Gamma(u),\qquad |A_u|=a,
\]
and the sets $A_u$, as $u$ ranges over $Z$, are pairwise disjoint.

Now assign the memory bits. Set a bit to one exactly when its location belongs to $A_u$ for some $u\in S$. Set every other bit to zero. We check the three possible kinds of query.

\smallskip
\noindent\emph{Case 1: $u\in S$.} All $a$ locations in $A_u$ contain one. Since $a>t/2$, the query answers Yes.

\smallskip
\noindent\emph{Case 2: $u\in Z\setminus S$.} The set $A_u$ is disjoint from every $A_v$ with $v\in S$. Hence all $a$ locations in $A_u$ contain zero. At most $t-a=c<a$ of the probes for $u$ can contain one, so the query answers No.

\smallskip
\noindent\emph{Case 3: $y\notin Z$.} Every one-location lies in $\Gamma(S)$ and hence in $\Gamma(S')$. Because $y\notin T_{S'}$, we have
\[
 |\Gamma(y)\cap\Gamma(S')|<a.
\]
This quantity is an integer, so it is at most $a-1$. Every one-location probed by $y$ lies in $\Gamma(S')$. Thus at most $a-1$ probes for $y$ contain one, and the query answers No. Also $y\notin S$, because $S\subseteq S'\subseteq Z$.

Every query falls into one of these three cases. The assignment is therefore correct for every element of the universe.
\end{proof}

\section{A random graph with the required expansion}
\label{sec:random}

We now prove that a graph satisfying (P1) exists. The graph is random. For each $u\in[m]$ and each block $V_i$, choose the neighbour of $u$ in $V_i$ independently and uniformly from its $s$ vertices. All choices, including choices made by different left vertices, are independent.

Put
\[
 C_t:=e^{a+1}\left(\frac at\right)^c.
\]
The constant $C_t$ depends only on $t$. Choose the block size
\[
 s:=\left\lceil\left(2C_t mR^{c-1}\right)^{1/c}\right\rceil.
\]
We will show that the probability of failure of (P1) is strictly below one.

\subsection{A simple size check}

We first check that $s\ge R$. Since $a>t/2$, we have $a/t>1/2$. Also $a\ge c+1$. It follows that
\[
 C_t=e^{a+1}\left(\frac at\right)^c
 >e^{a+1}2^{-c}
 \ge e^{c+2}2^{-c}
 =e^2\left(\frac e2\right)^c>1.
\]
The definition of $s$ gives
\[
 s^c\ge2C_t mR^{c-1}.
\]
Since $m\ge R$ and $2C_t>1$, we have $s^c>R^c$. Hence $s>R$. 

\subsection{Bounding a fixed witness}

Suppose (P1) fails. Then there is a nonempty set $W\subseteq[m]$ with $|W|\le R$ and
\[
 |\Gamma(W)|<a|W|.
\]
Write $r=|W|$. As neighbourhood sizes are integers, failure means
\[
 |\Gamma(W)|\le ar-1.
\]
Set $h=ar-1$. Thus the neighbourhood of $W$ is contained in some set $\Lambda\subseteq V$ of exactly $h$ vertices. The set $\Lambda$ can be chosen because
\[
 h<ar\le aR<tR\le ts=|V|.
\]

Fix $W$ and $\Lambda$. For each block $V_i$, let
\[
 \ell_i:=|\Lambda\cap V_i|.
\]
The blocks partition $V$, and so $\sum_{i=1}^t\ell_i=h$. Each of the $r$ elements of $W$ chooses its neighbour in block $V_i$ independently and uniformly. The probability that all $r$ neighbours in this block lie in $\Lambda$ is therefore $(\ell_i/s)^r$. Choices in different blocks are also independent. Thus
\[
 \Prb(\Gamma(W)\subseteq\Lambda)
 =\prod_{i=1}^t\left(\frac{\ell_i}{s}\right)^r.
\]
By the arithmetic-geometric mean inequality,
\[
 \prod_{i=1}^t\ell_i\le\left(\frac ht\right)^t.
\]
Consequently,
\[
 \Prb(\Gamma(W)\subseteq\Lambda)
 \le\left(\frac{h}{ts}\right)^{tr}.
\]
This estimate applies to each fixed pair $(W,\Lambda)$.

\subsection{Union bound over all witnesses}

There are $\binom mr$ choices for $W$ and $\binom{ts}{h}$ choices for $\Lambda$. The union bound gives
\[
 \Prb(\text{a failure witness of size }r)
 \le\binom mr\binom{ts}{h}
 \left(\frac{h}{ts}\right)^{tr}.
\]
We now simplify this expression. We use the standard estimate
\[
 \binom{x}{j}\le\left(\frac{ex}{j}\right)^j,
 \qquad 1\le j\le x.
\]
Applying it to both binomial coefficients yields
\begin{align*}
 \binom mr\binom{ts}{h}\left(\frac{h}{ts}\right)^{tr}
 &\le
 \left(\frac{em}{r}\right)^r
 \left(\frac{ets}{h}\right)^h
 \left(\frac{h}{ts}\right)^{tr}\\
 &=\left(\frac{em}{r}\right)^r e^h
 \left(\frac{h}{ts}\right)^{tr-h}.
\end{align*}
The exponent in the last factor is easy to compute. Since $h=ar-1$ and $t-a=c$,
\[
 tr-h=tr-ar+1=cr+1.
\]
Also, $h\le ar$. Hence
\[
 e^h\le e^{ar},\qquad
 \left(\frac{h}{ts}\right)^{cr+1}
 \le\left(\frac{ar}{ts}\right)^{cr+1}.
\]
It follows that
\begin{align*}
 \Prb(\text{a failure witness of size }r)
 &\le\left(\frac{em}{r}\right)^r e^{ar}
 \left(\frac{ar}{ts}\right)^{cr+1}\\
 &=\frac{ar}{ts}
 \left[\frac{em}{r}e^a
 \left(\frac{ar}{ts}\right)^c\right]^r\\
 &=\frac{ar}{ts}
 \left[e^{a+1}\left(\frac at\right)^c
 \frac{mr^{c-1}}{s^c}\right]^r\\
 &=\frac{ar}{ts}
 \left(\frac{C_tmr^{c-1}}{s^c}\right)^r.
\end{align*}

Every failure of (P1) has a witness of some size $r$ with $1\le r\le R$. A second union bound over these sizes therefore gives
\begin{equation}\label{eq:P1-sum}
 \Prb(\textup{(P1) fails})
 \le\sum_{r=1}^{R}\frac{ar}{ts}
 \left(\frac{C_tmr^{c-1}}{s^c}\right)^r.
\end{equation}

\subsection{Summing the probability bound}

The choice of $s$ gives
\[
 s^c\ge2C_t mR^{c-1}.
\]
For every $r\le R$, we therefore have
\[
 \frac{C_tmr^{c-1}}{s^c}
 \le\frac12\left(\frac rR\right)^{c-1}
 \le\frac12.
\]
We used $c\ge2$ here. Substituting this into \eqref{eq:P1-sum}, we obtain
\[
 \Prb(\textup{(P1) fails})
 \le\frac{a}{ts}\sum_{r=1}^{R}r2^{-r}.
\]
The standard geometric-series identity gives
\[
 \sum_{r=1}^{\infty}r2^{-r}=2.
\]
Thus
\[
 \Prb(\textup{(P1) fails})\le\frac{2a}{ts}.
\]
We already showed that $s>R$. Also $g\ge1$, because $c\ge2$, $n\ge1$, and $\delta\le2$. It follows that $R=n+g+1\ge3$ and $s\ge3$. Since $a<t$,
\[
 \Prb(\textup{(P1) fails})
 \le\frac{2a}{ts}<\frac2s\le\frac23<1.
\]
Therefore the random graph satisfies (P1) with positive probability.

\subsection{Property (P2) follows from expansion}

We have proved that the random graph satisfies (P1) with positive probability. We now show that every graph satisfying (P1) automatically satisfies (P2). This is the step that removes the need for a second probability estimate.

\begin{lemma}[Expansion implies the exceptional-set property]\label{lem:P1-implies-P2}
With $g$ and $R$ chosen as above, every graph satisfying (P1) also satisfies (P2).
\end{lemma}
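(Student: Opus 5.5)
Argue by contradiction. Fix $S$ with $|S|=n$ and suppose $|T_S|\ge g+1$. Choose a subset $Y\subseteq T_S$ with $|Y|=g+1$, and set $W:=S\cup Y$. Since $Y\cap S=\emptyset$, we have $|W|=n+g+1=R$, so (P1) applies to $W$ and gives
\[
 |\Gamma(W)|\ge aR=a(n+g+1).
\]
We then bound $|\Gamma(W)|$ from above by counting the new neighbours each $y\in Y$ can add. By definition of $T_S$, at least $a$ of the $t$ probe locations of $y$ already lie in $\Gamma(S)$. Hence $|\Gamma(y)\setminus\Gamma(S)|\le t-a=c$. Combining this with the trivial bound $|\Gamma(S)|\le tn$ yields
\[
 |\Gamma(W)|\le|\Gamma(S)|+\sum_{y\in Y}|\Gamma(y)\setminus\Gamma(S)|\le tn+c(g+1).
\]

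Comparing the two bounds, and using $t=a+c$ and $a-c=\delta$, we get
\[
 a(n+g+1)\le (a+c)n+c(g+1),
 \qquad\text{that is,}\qquad
 \delta(g+1)\le cn.
\]
The contradiction now comes from the choice $g=\lfloor cn/\delta\rfloor$. Since $g+1>cn/\delta$, we have $\delta(g+1)>cn$, which is impossible. Therefore $|T_S|\le g$ for every $S$, which is exactly (P2).

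The main point to check is that the argument applies (P1) to the full set $W$ of size exactly $R=n+g+1$. This is where the extra $+1$ in the radius is used, and it is why a set of size $g+1$ must be taken inside $T_S$ rather than all of $T_S$. We also need $T_S$ to have at least $g+1$ elements to choose from, which holds under the contradiction hypothesis, and $m\ge R$ is not needed beyond $W\subseteq[m]$. A further subtlety is that $\Gamma(S)$ may be strictly smaller than $tn$. This only helps, since we need an upper bound on $|\Gamma(W)|$. Neighbours of different $y$'s outside $\Gamma(S)$ may coincide, which again only strengthens the upper bound. No probabilistic input is needed; the implication is purely deterministic.
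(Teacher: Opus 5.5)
Your proof is correct and follows the paper's argument essentially step for step. You take $g+1$ elements of $T_S$ together with $S$ to form a set of size exactly $R$, compare the lower bound $aR$ from (P1) with the upper bound $tn+c(g+1)$, and derive $\delta(g+1)\le cn$, which contradicts $g=\lfloor cn/\delta\rfloor$.
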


\begin{proof}
Suppose that a graph satisfies (P1). Fix an arbitrary set $S\subseteq[m]$ with $|S|=n$. We will show that $|T_S|\le g$.

Assume, for a contradiction, that $|T_S|\ge g+1$. Choose a set $T\subseteq T_S$ with exactly $g+1$ elements. Every element of $T_S$ lies outside $S$, so $S$ and $T$ are disjoint. Define
\[
 W:=S\cup T.
\]
Then
\[
 |W|=|S|+|T|=n+g+1=R.
\]
Property (P1) applies to $W$, and therefore
\[
 |\Gamma(W)|\ge aR=a(n+g+1).
\]

We now count the same neighbourhood from above. Each element of $S$ has exactly $t$ neighbours, so $|\Gamma(S)|\le tn$. For every $y\in T$, the definition of $T_S$ gives
\[
 |\Gamma(y)\cap\Gamma(S)|\ge a.
\]
The vertex $y$ has exactly $t$ neighbours. At least $a$ of them already lie in $\Gamma(S)$. Hence at most $t-a=c$ of them lie outside $\Gamma(S)$, and
\[
 |\Gamma(y)\setminus\Gamma(S)|\le c.
\]
Every vertex of $\Gamma(W)$ either belongs to $\Gamma(S)$ or is a neighbour outside $\Gamma(S)$ of some element of $T$. Thus
\begin{align*}
 |\Gamma(W)|
 &\le |\Gamma(S)|+
       \sum_{y\in T}|\Gamma(y)\setminus\Gamma(S)|\\
 &\le tn+c|T|\\
 &=tn+c(g+1).
\end{align*}

The lower and upper bounds cannot both hold. Indeed,
\begin{align*}
 a(n+g+1)-\bigl(tn+c(g+1)\bigr)
 &=(a-t)n+(a-c)(g+1)\\
 &=-cn+\delta(g+1).
\end{align*}
By the definition $g=\lfloor cn/\delta\rfloor$,
\[
 g\le\frac{cn}{\delta}<g+1.
\]
Since $\delta>0$, this implies $\delta(g+1)>cn$. The difference above is therefore strictly positive, so
\[
 a(n+g+1)>tn+c(g+1).
\]
This contradicts the two bounds on $|\Gamma(W)|$.

We conclude that $|T_S|\le g$. Since $S$ was arbitrary, (P2) holds for every set of size $n$.
\end{proof}

\section{The admissible graph exists}
\label{sec:existence}

We can now prove the main existence theorem. The proof follows the random-graph method of Garg and  \cite{GargRadhakrishnan2017}, but the second graph property is obtained by the deterministic lemma rather than a separate probability estimate.

\begin{theorem}[Existence of an admissible graph]\label{thm:existence}
Let $m,n,t$ be integers with $1\le n\le m$ and $t\ge5$. Define
\[
 a=\left\lfloor\frac t2\right\rfloor+1,
 \quad c=t-a,
 \quad \delta=a-c,
 \quad g=\left\lfloor\frac{cn}{\delta}\right\rfloor,
 \quad R=n+g+1.
\]
Assume $m\ge R$, and set
\[
 C_t=e^{a+1}\left(\frac at\right)^c,
 \qquad
 s=\left\lceil\left(2C_t mR^{c-1}\right)^{1/c}\right\rceil.
\]
Then there is a non-adaptive $(m,s,t)$-graph satisfying (P1) and (P2).
\end{theorem}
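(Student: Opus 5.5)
The theorem simply packages what Section~\ref{sec:random} has already established, so the proof is an assembly of those pieces via the probabilistic method. First, we take the random non-adaptive $(m,s,t)$-graph described at the start of Section~\ref{sec:random}: each $u\in[m]$ picks its neighbour in each block $V_i$ independently and uniformly among the $s$ vertices. By construction this is always a non-adaptive $(m,s,t)$-graph in the sense of Definition~\ref{def:graph}, since every left vertex has exactly one neighbour per block. The parameters $a,c,\delta,g,R,C_t,s$ in the statement are exactly those fixed earlier, and the hypotheses $1\le n\le m$, $t\ge5$, $m\ge R$ are exactly those used there.

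Next, we invoke the probability computation. The size check gives $s>R$, which also guarantees that a witness set $\Lambda$ of size $h=ar-1$ fits inside $V$. The fixed-witness bound, the union bound \eqref{eq:P1-sum}, and the summation then yield
\[
\Prb(\textup{(P1) fails})\le \frac{2a}{ts}<\frac23<1 .
\]
Hence with positive probability the random graph satisfies (P1), so some fixed outcome $G$ satisfies (P1). The only points to recheck are the ones that depend on the hypotheses. We need $c\ge2$ (from $t\ge5$) in the step $(r/R)^{c-1}\le1$, and we need $g\ge1$, hence $R\ge3$ and $s\ge3$, in the final inequality. Both follow from $t\ge5$ and $n\ge1$.

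Finally, we apply Lemma~\ref{lem:P1-implies-P2} to this $G$. That lemma is deterministic, so $G$ also satisfies (P2) with the same $g$ and $R$. Thus $G$ satisfies both (P1) and (P2), which proves the theorem. There is no real obstacle here. The step needing the most care is confirming that the choice $R=n+g+1$ in (P1) is exactly the radius used in Lemma~\ref{lem:P1-implies-P2}, where $W=S\cup T$ has size $R$. It is, so no extra slack is required.
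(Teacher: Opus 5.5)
Your proposal is correct and follows the paper's proof: the random graph fails (P1) with probability at most $2a/(ts)<1$, so some outcome satisfies (P1), and Lemma~\ref{lem:P1-implies-P2} then gives (P2) deterministically for that same graph. The points where you use the hypotheses are the same ones the paper relies on: $s>R$ so that $\Lambda$ fits inside $V$, and $g\ge1$, hence $s\ge3$, for the final inequality.
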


\begin{proof}
Generate the random graph described in \Cref{sec:random}. The preceding calculation shows that the probability of failure of (P1) is at most $2a/(ts)$, which is strictly less than one. Hence the probability that (P1) holds is positive.

By \Cref{lem:P1-implies-P2}, every graph satisfying (P1) also satisfies (P2). Thus the event that both properties hold contains the event that (P1) holds. In particular,
\[
 \Prb(\textup{(P1) and (P2) hold})
 \ge1-\frac{2a}{ts}>0.
\]
An event with positive probability is nonempty. Therefore at least one graph satisfies both properties.
\end{proof}

\section{The resulting space bound}
\label{sec:space}

\begin{theorem}[Logarithm-free space bound]\label{thm:space}
For every fixed integer $t\ge5$, there is a deterministic non-adaptive strict-majority scheme for set membership using
\[
 s_N(m,n,t)=O_t\left(m^{1/c}n^{1-1/c}\right)
 \quad\text{bits},
 \qquad c=\left\lfloor\frac{t-1}{2}\right\rfloor,
\]
for all integers $1\le n\le m$.
\end{theorem}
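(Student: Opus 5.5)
The proof splits into two regimes according to whether $m\ge R$, where $R=n+g+1$ as in \Cref{thm:existence}.

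\textbf{Case $m\ge R$.} Here \Cref{thm:existence} supplies a non-adaptive $(m,s,t)$-graph satisfying (P1) and (P2), with
\[
 s=\left\lceil\left(2C_t mR^{c-1}\right)^{1/c}\right\rceil.
\]
\Cref{lem:sat} turns this graph into a correct strict-majority scheme using $ts$ bits. It remains to bound $ts$.

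Since $\delta\ge1$, we have $g\le cn$, and therefore $R\le(c+1)n+1\le(c+2)n$. Substituting,
\[
 s\le\left(2C_t(c+2)^{c-1}\right)^{1/c}m^{1/c}n^{(c-1)/c}+1.
\]
The additive $1$ is dominated by the main term because $m^{1/c}n^{1-1/c}\ge n\ge1$ whenever $m\ge n$. Multiplying by $t$ gives
\[
 ts\le K_t\, m^{1/c}n^{1-1/c},
\]
where $K_t$ depends only on $t$, since $a$, $c$ and $C_t$ are functions of $t$ alone.

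\textbf{Case $m<R$.} Here I would not use the graph at all and instead take the characteristic-vector scheme. It stores $m$ bits and answers each query with one probe. To count as a $t$-probe strict-majority scheme, it can read the same bit $t$ times, or equivalently store $t$ copies of the characteristic vector, one per block, which is a legitimate non-adaptive $(m,m,t)$-graph using $tm$ bits.

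The bound comes from $m<R\le(c+2)n$. Writing $m=m^{1/c}m^{1-1/c}$ and using $m^{1-1/c}\le((c+2)n)^{1-1/c}$, we get
\[
 tm=O_t\!\left(m^{1/c}n^{1-1/c}\right).
\]
Taking the larger of the two constants covers all $1\le n\le m$.

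The only step needing care is this small-universe case. One must check that the trivial scheme fits the strict-majority framework, or simply note that $s_N$ is defined as the minimum over all non-adaptive schemes with at most $t$ probes, so the one-probe characteristic vector already gives an upper bound of $m$ bits. The rest is constant bookkeeping. In particular, the constant is $t\,(2C_t)^{1/c}(c+2)^{(c-1)/c}$, which grows like $e^{O(1)}$ per unit of $t$ in the exponent but is fixed for fixed $t$, consistent with the $O_t$ notation.
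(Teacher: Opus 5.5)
Your proof is correct and, for $m\ge R$, follows the paper's argument: the existence theorem plus the satisfiability lemma, a linear bound on $R$ (you use $R\le(c+2)n$ where the paper uses $R\le(2+c/\delta)n$), and absorption of the additive one via $m^{1/c}n^{1-1/c}\ge n$. For $m<R$, your $t$-fold replicated characteristic vector with $tm\le t\,(c+2)^{1-1/c}m^{1/c}n^{1-1/c}$ is the device the paper uses only in its five-probe subsection, since the paper's proof of this theorem simply assumes $m\ge R$; your version therefore settles that case for general $t$ (your explicit constant omits the absorbed $+1$, which is immaterial).
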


\begin{proof}
Assuming $m\ge R$. By \Cref{thm:existence}, an admissible graph exists with
\[
 s=\left\lceil\left(2C_t mR^{c-1}\right)^{1/c}\right\rceil
\]
vertices in each block. By \Cref{lem:sat}, this graph gives a correct scheme with $ts$ memory bits.

The definition of $R$ gives
\[
 R=n+\left\lfloor\frac{cn}{\delta}\right\rfloor+1
 \le n+\frac{cn}{\delta}+1
 \le\left(2+\frac c\delta\right)n,
\]
where the last inequality uses $1\le n$. Therefore
\begin{align*}
 s&\le\left(2C_t mR^{c-1}\right)^{1/c}+1\\
 &\le\left(2C_t\left(2+\frac c\delta\right)^{c-1}\right)^{1/c}
 m^{1/c}n^{(c-1)/c}+1.
\end{align*}
The factor in front depends only on $t$. Also, since $m\ge n$,
\[
 m^{1/c}n^{(c-1)/c}
 =n\left(\frac mn\right)^{1/c}\ge n\ge1.
\]
Thus the additive one can be absorbed into the same constant. After multiplying by the fixed number $t$ of blocks, the total space is
\[
 O_t\left(m^{1/c}n^{1-1/c}\right).
\]

\end{proof}

\begin{corollary}[Odd and even probe counts]\label{cor:odd-even}
For every fixed odd integer $t\ge5$,
\[
 s_N(m,n,t)=O_t\left(
 m^{\frac{2}{t-1}}n^{1-\frac{2}{t-1}}
 \right).
\]
For every fixed even integer $t\ge6$,
\[
 s_N(m,n,t)=O_t\left(
 m^{\frac{2}{t-2}}n^{1-\frac{2}{t-2}}
 \right).
\]
\end{corollary}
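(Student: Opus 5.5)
This is a direct specialization of \Cref{thm:space}, so the plan is to substitute the value of $c=\lfloor (t-1)/2\rfloor$ in each parity case. The one gap to close is that the proof of \Cref{thm:space} assumed $m\ge R$, so the regime $n\le m<R$ needs separate treatment.

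\emph{Odd $t\ge5$.} Write $t=2k+1$ with $k\ge2$. Then $c=\lfloor 2k/2\rfloor=k=(t-1)/2$, so $1/c=2/(t-1)$. Substituting into the bound of \Cref{thm:space} gives exactly $m^{2/(t-1)}n^{1-2/(t-1)}$.

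\emph{Even $t\ge6$.} Write $t=2k$ with $k\ge3$. Then $c=\lfloor(2k-1)/2\rfloor=k-1=(t-2)/2$, so $1/c=2/(t-2)$. Substituting again gives the stated expression. We need $t\ge6$ here rather than $t\ge5$ only because $t$ is even. In both cases $c\ge2$, so every hypothesis of \Cref{thm:space} and \Cref{thm:existence} is met.

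\emph{The range $n\le m<R$.} Since $R\le(2+c/\delta)n=O_t(n)$, in this range $m=O_t(n)$. The characteristic vector of $S$ uses $m$ bits and one probe. We can pad it to $t$ probes by reading the same bit $t$ times, so majority decoding still answers correctly. Thus $s_N(m,n,t)\le m$. Because $m<R=O_t(n)$, we have
\[
 m=m^{1/c}m^{1-1/c}\le m^{1/c}R^{1-1/c}=O_t\!\left(m^{1/c}n^{1-1/c}\right).
\]
So the trivial scheme already meets the same bound in this range, and the bound holds for all $1\le n\le m$.

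The only step with any content is this small-$m$ case, since the proof of \Cref{thm:space} as written began by assuming $m\ge R$. With it handled, the corollary is immediate from \Cref{thm:space} and the parity computation of $c$.
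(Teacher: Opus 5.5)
Your proposal is correct and follows the paper's proof, which substitutes $c=(t-1)/2$ for odd $t$ and $c=(t-2)/2$ for even $t$ into \Cref{thm:space}. Your extra argument for the range $n\le m<R$, using the characteristic vector and $m\le m^{1/c}R^{1-1/c}=O_t(m^{1/c}n^{1-1/c})$, is also correct; it supplies a step that the proof of \Cref{thm:space} leaves implicit, since the paper writes it out only for $t=5$.
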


\begin{proof}
If $t$ is odd, then $c=(t-1)/2$, so $1/c=2/(t-1)$. Substituting this into \Cref{thm:space} gives the first formula.

If $t$ is even, then $c=(t-2)/2$, so $1/c=2/(t-2)$. The second formula follows in the same way.
\end{proof}

\subsection{The five-probe case}

For $t=5$, we have $a=3$, $c=2$, and $\delta=1$. The parameters simplify to
\[
 g=2n,\qquad R=3n+1,
 \qquad C_5=e^4\left(\frac35\right)^2=\frac{9e^4}{25}.
\]
Thus the block size in the theorem is
\[
 s=\left\lceil\sqrt{\frac{18e^4}{25}\,m(3n+1)}\right\rceil.
\]
Since $3n+1\le4n$ for every $n\ge1$,
\[
 s\le\frac{6\sqrt{2}\,e^2}{5}\sqrt{mn}+1
 <13\sqrt{mn}+1.
\]
Hence the total space is $O(\sqrt{mn})$. A simple explicit statement is that the scheme uses at most $5\lceil13\sqrt{mn}\rceil$ bits in the range $m\ge3n+1$. When $m\le3n$, the replicated characteristic vector already uses $5m=O(\sqrt{mn})$ bits. Thus the same asymptotic bound holds for all $1\le n\le m$.

\begin{corollary}[Five probes]
There exists a non-adaptive strict-majority five-probe scheme that stores every subset of size at most $n$ from $[m]$ using $O(\sqrt{mn})$ bits.
\end{corollary}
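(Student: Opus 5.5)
The plan is to split into two ranges of $m$ relative to $R=3n+1$ and treat each with results already in hand. For $t=5$ we have $a=3$, $c=2$, $\delta=1$, so $g=2n$ and $R=3n+1$.

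\emph{Range $m\ge 3n+1$.} This is exactly the hypothesis $m\ge R$ of \Cref{thm:existence}. That theorem gives a non-adaptive $(m,s,5)$-graph satisfying (P1) and (P2) with $s=\lceil\sqrt{2C_5 m(3n+1)}\rceil$. \Cref{lem:sat} then turns this admissible graph into a correct strict-majority scheme using $5s$ bits. The only computation needed is the one already displayed in the five-probe subsection. Using $3n+1\le 4n$, we get $s\le \tfrac{6\sqrt2 e^2}{5}\sqrt{mn}+1<13\sqrt{mn}+1$. Since $m\ge n$ gives $\sqrt{mn}\ge1$, the additive one is absorbed, so the total space is $O(\sqrt{mn})$.

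\emph{Range $n\le m\le 3n$.} Here the random-graph theorem does not apply, because $m<R$. I would use the trivial scheme instead: store the characteristic vector of $S$ five times, once in each of five blocks of size $m$. The query for $x$ probes position $x$ in every block. All five bits equal $[x\in S]$, so strict majority answers correctly. Equivalently, this is the $(m,m,5)$-graph in which each $u$ is joined to its own copy in every block. The space is $5m$. Because $m\le 3n$, we have $m=\sqrt{m}\sqrt{m}\le\sqrt{3}\sqrt{mn}$, so $5m=O(\sqrt{mn})$.

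Combining the two ranges gives a non-adaptive strict-majority five-probe scheme with $O(\sqrt{mn})$ bits for all $1\le n\le m$. There is no real obstacle. The only point needing care is that \Cref{thm:existence} requires $m\ge R$, which forces the separate small-$m$ case. That case is handled by the observation $m\le 3n\Rightarrow m=O(\sqrt{mn})$.
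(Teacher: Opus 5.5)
Your proof is correct and follows the paper's argument essentially verbatim. For $m\ge 3n+1$ you apply \Cref{thm:existence} and \Cref{lem:sat} with the bound $s<13\sqrt{mn}+1$, and for $m\le 3n$ you use the replicated characteristic vector with $5m\le 5\sqrt{3}\,\sqrt{mn}$; your explicit checks of that trivial scheme and of the inequality $m\le\sqrt{3}\sqrt{mn}$ only spell out what the paper states in one line.
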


\section{Comparison and discussion}
\label{sec:discussion}

\subsection{Comparison with the Garg--Radhakrishnan  \cite{GargRadhakrishnan2017} bound}

For odd $t\ge5$, Garg and Radhakrishnan prove the upper bound
\[
 O\!\left(
 t\,m^{\frac{2}{t-1}}n^{1-\frac{2}{t-1}}
 \log_2\frac{2m}{n}
 \right)
\]
using a random graph and majority decoding \cite{GargRadhakrishnan2017}. Our result gives, for each fixed odd $t\ge5$,
\[
 s_N(m,n,t)=O_t\!\left(
 m^{\frac{2}{t-1}}n^{1-\frac{2}{t-1}}
 \right).
\] And for the even $t$,  \[
 s_N(m,n,t)=O_t\!\left(
 m^{\frac{2}{t-2}}n^{1-\frac{2}{t-2}}
 \right).\]
The logarithmic factor is absent.

The reason is not a stronger concentration inequality. It is a change in the logical order of the graph-property proof. The original admissibility framework proves expansion and the exceptional-set bound separately. Here, expansion through the radius $R=n+g+1$ forces the exceptional-set bound automatically. Since $g$ is proportional to $n$ for fixed $t$, the radius $R$ is also proportional to $n$. The expansion union bound then leads to a block size of order $m^{1/c}n^{1-1/c}$.

\subsection{Relation to lower bounds}

The information-theoretic lower bound is $\log_2\binom{m}{n}$. The stronger lower bound of Buhrman et al. applies even to adaptive schemes in its stated regime and has the form
\[
 \Omega\!\left(t\,m^{1/t}n^{1-1/t}\right)
\]
for $n\le m^{1-\varepsilon}$ and $t=o(\log m)$, with fixed $\varepsilon>0$ \cite{BuhrmanEtAl2002}. Since non-adaptive schemes are a subclass of adaptive schemes, that lower bound also applies to our model in the same regime.

For odd $t\ge5$, the exponent of $m$ in our upper bound is $2/(t-1)$, while the lower bound has exponent $1/t$. These do not match. Our result therefore does not determine the optimal non-adaptive space complexity. It removes one logarithmic factor from a known upper bound.

\subsection{Nonconstructive existence}

Our argument uses the probabilistic method. It proves that a suitable graph exists, but it does not give a deterministic algorithm for finding that graph. Once a graph is fixed, the query pattern is simple: each element has one fixed location in each block, and the answer is the strict majority of the $t$ read bits. The encoding proof reduces to a bipartite matching problem for the particular target set.

The constants are explicit in the proof, but they have not been optimized. The hidden constant in $O_t(\cdot)$ depends on the fixed value of $t$. We make no claim of a uniform constant when $t$ grows with $m$.

\section{Conclusion}
\label{sec:conclusion}

We gave a detailed random-graph argument for logarithm-free upper bounds in the non-adaptive bit-probe model. The main observation is simple. A graph that expands every set up to a carefully chosen radius cannot have too many outside elements whose neighbourhoods overlap a target neighbourhood in a majority of positions. This turns the exceptional-set condition into a consequence of expansion.

The remaining probability calculation is a union bound over small sets with unusually small neighbourhoods. It gives a graph with the required expansion at block size $O_t(m^{1/c}n^{1-1/c})$. Hall's theorem then supplies the private memory locations needed by the encoder. For odd $t\ge5$, the result removes the logarithmic factor from the upper bound of Garg and Radhakrishnan \cite{GargRadhakrishnan2017}. The construction is nonconstructive, and the known lower bounds leave a gap. Closing that gap, or finding explicit graphs with the same parameters, remains open within the scope of this paper.

\section{Declaration of Generative AI and AI-Assisted Technologies}

During the preparation of this manuscript, the author used ChatGPT and Claude to improve the clarity and presentation of the writing, review mathematical proofs, and identify possible errors. The author carefully reviewed and verified the content and takes full responsibility for the final manuscript.


\begin{thebibliography}{20}

\bibitem{AlonFeige2009}
N.~Alon and U.~Feige,
\newblock On the power of two, three and four probes,
\newblock in \emph{Proceedings of the Twentieth Annual ACM--SIAM Symposium on Discrete Algorithms (SODA 2009)}, pp.~346--354, 2009.

\bibitem{BaigKeshWALCOM2018}
M.~G.~A.~H.~Baig and D.~Kesh,
\newblock Two new schemes in the bitprobe model,
\newblock in \emph{WALCOM: Algorithms and Computation, 12th International Conference}, Lecture Notes in Computer Science, vol.~10755, pp.~68--79, Springer, 2018.
\newblock \href{https://doi.org/10.1007/978-3-319-75172-6_7}{doi:10.1007/978-3-319-75172-6\_7}.

\bibitem{BaigKeshTCS2020}
M.~G.~A.~H.~Baig and D.~Kesh,
\newblock Two improved schemes in the bitprobe model,
\newblock \emph{Theoretical Computer Science}, 806:543--552, 2020.
\newblock \href{https://doi.org/10.1016/j.tcs.2019.08.033}{doi:10.1016/j.tcs.2019.08.033}.

\bibitem{BuhrmanEtAl2002}
H.~Buhrman, P.~B.~Miltersen, J.~Radhakrishnan, and S.~Venkatesh,
\newblock Are bitvectors optimal?,
\newblock \emph{SIAM Journal on Computing}, 31(6):1723--1744, 2002.
\newblock \href{https://doi.org/10.1137/S0097539702405292}{doi:10.1137/S0097539702405292}.

\bibitem{GargRadhakrishnan2017}
M.~Garg and J.~Radhakrishnan,
\newblock Set membership with non-adaptive bit probes,
\newblock in \emph{34th Symposium on Theoretical Aspects of Computer Science (STACS 2017)}, Leibniz International Proceedings in Informatics, vol.~66, Article 38, pp.~38:1--38:13, 2017.
\newblock \href{https://doi.org/10.4230/LIPIcs.STACS.2017.38}{doi:10.4230/LIPIcs.STACS.2017.38}.

\bibitem{NicholsonRamanRao2013}
P.~K.~Nicholson, V.~Raman, and S.~S.~Rao,
\newblock A survey of data structures in the bitprobe model,
\newblock in \emph{Space-Efficient Data Structures, Streams, and Algorithms}, Lecture Notes in Computer Science, vol.~8066, pp.~303--318, Springer, 2013.
\newblock \href{https://doi.org/10.1007/978-3-642-40273-9_19}{doi:10.1007/978-3-642-40273-9\_19}.

\end{thebibliography}
\end{document}